\journal{ = = = = = = = =}
\definecolor{highlight-col}{RGB}{0,125,0}
\newcommand{\colb}[1]{{\color{highlight-col}{\textbf{\textit{#1}}}}}
\newcommand{\probname}[1]{{\color{orange!80!black}{\textbf{{#1}}}}}
\newcolumntype{P}[1]{>{\centering\arraybackslash}p{#1}}
\newcolumntype{M}[1]{>{\centering\arraybackslash}m{#1}}
\newtheorem{observation}{Observation}
\newtheorem{theorem}{Theorem}
\newtheorem{lemma}[theorem]{Lemma}
\newcommand{\np}{$\mathsf{NP}$}
\newcommand{\ptas}{$\mathsf{PTAS}$}
\newcommand{\cz}{$\mathcal{Z~}$} 
\newcolumntype{P}[1]{>{\centering\arraybackslash}p{#1}}
\newcolumntype{M}[1]{>{\centering\arraybackslash}m{#1}}
\newcounter{word}
\newcommand*{\LBL}{%
  \@dblarg\@LBL
}
\def\@LBL[#1]#2{%
  \begingroup
    \renewcommand*{\theword}{#2}%
    \refstepcounter{word}%
    \label{#1}%
    #2%
  \endgroup
}
\newmdenv[
  backgroundcolor=gray!15,
  topline=false,
  bottomline=false,
  rightline=false,
  skipabove=\topsep,
  skipbelow=\topsep
]{siderules}
\newlength{\maxmin}
\definecolor{tab-color}{RGB}{170,80,20}
\newcommand{\tabcol}[1]{\color{tab-color}{\textbf{#1}}}
\newcommand{\tabcolm}[1]{\color{tab-color}{\boldmath{#1}}}
\begin{document}

\begin{frontmatter}

\title{Faster Approximation for Maximum Independent Set on Unit Disk Graph}

\author{Subhas C. Nandy}
\ead {nandysc@isical.ac.in}

\author{Supantha Pandit\corref{cor}}
\cortext[cor]{Corresponding author}
\ead{pantha.pandit@gmail.com}

\author{Sasanka Roy\corref{cor1}}
\ead{sasanka@isical.ac.in}

\address{Indian Statistical Institute, Kolkata, India}

\begin{abstract}
Maximum independent set from a given set $D$ of unit disks intersecting a horizontal line can be solved in $O(n^2)$ time and $O(n^2)$ space. As a corollary, we design a factor 2 approximation algorithm for the maximum independent set problem on unit disk graph which takes both time and space of $O(n^2)$. The best known factor 2 approximation algorithm for this problem runs in $O(n^2 \log n)$ time and takes $O(n^2)$ space \cite{Jallu2016,Das2016}.

\end{abstract}

\begin{keyword}
Maximum independent set \sep Unit disk graph \sep Approximation algorithm.
\end{keyword}

\end{frontmatter}

\linenumbers

\section{Introduction}
\label{intro}

\colb{Intersection graphs} of geometric objects have used to model several problems that arise in real scenarios \cite{Roberts1978}. Two important applications of these graphs are frequency assignment in cellular networks \cite{Hale1980,Malesinska1997} and map labeling \cite{Agarwal1998}. If the geometric objects are disks then the corresponding intersection graph $G(V,E)$ is called the \colb{disk graph}. Here the vertex set $V$ corresponds to a given set of disks in the plane, and there is an edge between two vertices in $V$ iff the corresponding two disks intersect. 

A \colb{unit disk graph} is an intersection graph where each disk is of diameter 1. Let $G(V,E)$ be a given graph. A set $V'\subseteq V$ is said to be an \colb{independent set} of $G$ if no two vertices in $V'$ are connected by an edge in $G$. In the \colb{maximum independent set (MIS)} the goal is to find an independent set $V'$ which has the maximum cardinality. In this paper, we consider the following problem.

\begin{siderules}
\noindent\probname{Maximum Independent Set on Unit Disk Graph ({\textit{MISUDG}}):} Given a unit disk graph $G(V,E)$, find an independent set of $G$ whose cardinality is maximum.
\end{siderules}

To provide an approximation algorithm for {\it MISUDG}, we consider the following problem.

\begin{siderules}
\noindent \probname{{\textit{MISUDG-L}}:}\label{prob1} Given a set $D_i$ of $n_i$ unit disks that are intersected by horizontal line $L_i$, find a subset $D'\subseteq D_i$ of maximum cardinality such that no two disks in $D'$ have a common intersection point.
\end{siderules}

\vspace{.3cm}
\noindent \textbf{Related Work:} The {\it MISUDG} problem is known to be \np-complete \cite{Wang1988,Garey1979,Clark1990}.  In Table \ref{table-comparison}, we demonstrate a comparison study of the progress on {\it MISUDG}.

\begin{table}[htbp]
\begin{center}
\setlength{\tabcolsep}{12pt}
\begin{tabular}{|l|c|c|M{1.5cm}|}
\hline
Reference   & Factor  & Time & Space\\
\hline 

Marathe et al. \cite{Marathe1995} & 3 & $O(n^2)$ &$O(n)$\\
\hline
Das et al. \cite{Das2015}   & 2 & $O(n^3)$ &$O(n^2)$\\
\hline
Jallu and Das \cite{Jallu2016}  &2 &$O(n^2 \log n)$&$O(n^2)$\\
\hline
 Das et al. \cite{Das2016}   & 2.16 & $O(n\log^2n)$ &$O(n\log n)$\\
\hline
{\tabcol{Theorem \ref{theo-n2-fac2}}} & {\tabcol{2}}  &{\tabcolm{$O(n^2)$}} & {\tabcolm{$O(n^2)$}}\\
\hline
\end{tabular}
\end{center}
\caption{Comparison table}
\label{table-comparison}
\end{table}
%
%

Matsui \cite{Matsui1998} consider the {\it MISUDG} problem. If the disk centers are located inside a strip of fixed height $k$, then this problem can be solved in $O(n^{4\lceil{\frac{2k}{\sqrt{3}}}\rceil})$ time. Further, for any integer $r\geq 2$, Matsui \cite{Matsui1998} provided a $(1-\frac{1}{r})$ factor approximation algorithm for the same problem which takes $O(rn^{4\lceil{\frac{2(r-1)}{\sqrt{3}}}\rceil})$ time and $O(n^{2r})$ space. Das et al. \cite{Das2015}, also designed a \ptas~ for {\it MISUDG} problem by using the {\it shifting strategy} of Hochbaum and Maass \cite{Hochbaum1985}. For a given positive integer $k > 1$, they gave a $(1+\frac{1}{k})^2$ factor approximation algorithm which runs in $O(k^4n^{\sigma_k \log k}+n\log n)$ time and $O(n+k\log k)$ space, where $\sigma_k \leq \frac{7k}{3}+2$. Recently, Jallu and Das \cite{Jallu2016}, improved the running time of the same problem to $n^{O(k)}$ by keeping the approximation factor same. A fixed parameter tractable algorithm for the {\it MISUDG} problem was proposed by van Leeuwen \cite{vanLeeuwen2005}. The running time of that algorithm is $O(t^2 2^{2t} n)$, where the parameter $t$ represents the {\it thickness}\footnote{A {\it UDG} is said to have thickness $t$, if each strip in the slab decomposition of width 1 of the {\it UDG} contains at most $t$ disk centers} of the {\it UDG}. 

\vspace{.3cm}
\noindent \textbf{Our Contributions:} 
\begin{itemize}

\item We design an exact algorithm for {\it MISUDG-L} problem which runs in $O(n^2)$ time using $O(n^2)$ space.

\item We design a factor 2 approximation algorithm for {\it MISUDG} problem which takes both $O(n^2)$ time and space. It is an improvement over the best known result on this problem proposed by Jallu et al. \cite{Jallu2016}. They gave a factor 2 approximation algorithm for  this problem where the time and space complexities are $O(n^2 \log n)$  and $O(n^2)$ respectively.



\end{itemize}

\vspace{.3cm}
\noindent\textbf{Notations and Definitions:}  Let $D=\{d_1,d_2,\ldots,d_n\}$ be a set of $n$ unit disks in the plane. The center of the disk $d_i \in D$ is $c_i$. The $x$-coordinate of $c_i$ is $x(c_i)$. For a given set $S$ of disks, $|S|$ is the cardinality of $S$. The line segment connecting two points $s$ and $t$ is denoted by $\overline{st}$.

\section{\boldmath{$O(n^2)$} time exact algorithm for {\textit{MISUDG-L}} problem} \label{sec-prob-a}

In this section, we design an exact dynamic programming based algorithm for {\it MISUDG-L} problem. Let $D_i=\{d_1, d_2, \ldots d_{n_i}\}$ be a set of $n_i$ unit disks intersecting a horizontal line $L_i$. We partition the set $D_i$ into two sets $D_i^a$ and $D_i^b$, where $D_i^a$ is the set of all disks in $D_i$ whose centers are above the horizontal line $L_i$ and $D_i^b$ is the set of all disks in $D_i$ whose centers are below the horizontal line $L_i$. To design the dynamic programming algorithm, we need the following two lemmas.

\begin{lemma}\label{upper-side-ind-lemma} Let $d_1, d_2, d_3 \in D_i^a$ be three disks with centers $c_1$, $c_2$, and $c_3$ respectively. Assume that $x(c_1) < x(c_2) < x(c_3)$. Now if $d_1$, $d_2$ and $d_2$, $d_3$ are non-intersecting, then $d_1$, $d_3$ are non-intersecting.
\end{lemma}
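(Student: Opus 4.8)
The plan is to reduce this geometric statement to a one-dimensional separation argument, exploiting the fact that all three centers lie in a thin horizontal strip. First I would place $L_i$ along the $x$-axis and write $c_j = (x_j, y_j)$ for $j \in \{1,2,3\}$. Since each disk has diameter $1$ (hence radius $\tfrac12$) and intersects $L_i$ while having its center above $L_i$, the vertical distance from each center to $L_i$ is at most the radius, so $0 \le y_j \le \tfrac12$ for every $j$. In particular, any two of these centers differ in their $y$-coordinate by at most $\tfrac12$.

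Next I would rephrase the hypotheses. Two unit disks fail to have a common point precisely when the distance between their centers exceeds the sum of the radii, i.e.\ exceeds $1$. Thus the assumption that $d_1,d_2$ and $d_2,d_3$ are non-intersecting means
\[
(x_2-x_1)^2 + (y_2-y_1)^2 > 1 \quad\text{and}\quad (x_3-x_2)^2 + (y_3-y_2)^2 > 1.
\]
Because $(y_2-y_1)^2 \le \tfrac14$ and $(y_3-y_2)^2 \le \tfrac14$ by the strip bound above, these inequalities force $(x_2-x_1)^2 > \tfrac34$ and $(x_3-x_2)^2 > \tfrac34$. Using $x_1 < x_2 < x_3$, so that both horizontal gaps are positive, I would conclude $x_2 - x_1 > \tfrac{\sqrt3}{2}$ and $x_3 - x_2 > \tfrac{\sqrt3}{2}$.

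Finally I would add the two horizontal gaps: $x_3 - x_1 = (x_3 - x_2) + (x_2 - x_1) > \sqrt3 > 1$. Since the Euclidean distance between $c_1$ and $c_3$ is at least their horizontal separation, $(x_3-x_1)^2 + (y_3-y_1)^2 \ge (x_3-x_1)^2 > 3 > 1$, so the centers of $d_1$ and $d_3$ are more than $1$ apart and the disks do not intersect, as claimed.

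There is no serious obstacle here; the entire argument is a short chain of inequalities, and the geometric content is simply that confining the centers to a strip of height $\tfrac12$ converts each pairwise separation into a lower bound on the horizontal gap, which then accumulates from $c_1$ to $c_3$. The one point that requires care is the bookkeeping of constants — remembering that the radius is $\tfrac12$ rather than $1$, so that the strip has height $\tfrac12$ and the critical horizontal gap is $\tfrac{\sqrt3}{2}$ — together with keeping every inequality strict so that the tangent (single common point) case is correctly treated as intersecting.
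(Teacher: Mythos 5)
Your proof is correct, but it takes a genuinely different route from the paper's. The paper argues by contradiction with a covering argument: if $d_1$ and $d_3$ intersected, the segment $\overline{c_1c_3}$ would be entirely covered by $d_1 \cup d_3$; then, since $x(c_1) < x(c_2) < x(c_3)$ and $d_2$ must reach down to the line $L_i$, the disk $d_2$ is forced to meet the segment $\overline{c_1c_3}$, and hence to meet $d_1$ or $d_3$ --- contradicting the hypothesis. That argument is coordinate-free but stated rather informally (the case analysis of $c_2$ above versus below $\overline{c_1c_3}$ is sketchy, and it relies on the reader filling in why $d_2$ must cross the segment). Your proof instead exploits the strip structure quantitatively: centers of disks of radius $\tfrac12$ intersecting $L_i$ from above lie in a strip of height $\tfrac12$, so each non-intersecting pair forces a horizontal gap exceeding $\tfrac{\sqrt3}{2}$, and the two gaps add to give $x_3 - x_1 > \sqrt3 > 1$. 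This buys you a fully rigorous chain of explicit inequalities, correct handling of the tangency case via strict inequalities, and even a stronger quantitative conclusion (the centers end up more than $\sqrt3$ apart, not merely more than $1$); what you give up is the coordinate-free geometric picture, and note that your computation does not transfer to the paper's Lemma~\ref{lower-side-ind-lemma}, where $c_1$ and $c_2$ lie below the line and the naive strip bound (height $1$ rather than $\tfrac12$) is too weak --- there the paper needs a genuinely different (Voronoi-based) argument, whereas its covering-style reasoning for Lemma~\ref{upper-side-ind-lemma} is closer in spirit to what it reuses there.
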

\begin{proof} Suppose on the contrary, we assume that $d_1$ and $d_3$ are intersecting. Then clearly the line segment $\overline{c_1c_3}$ must be fully covered by $d_1$ and $d_3$. Since $x(c_1) < x(c_2) < x(c_3)$, $c_2$ can not be above $\overline{c_1c_3}$. Otherwise, it must intersect $\overline{c_1c_3}$ and hence intersect either $d_1$ or $d_2$. Further, the perpendicular distance between the horizontal line $L_i$ and any point on $\overline{c_1c_3}$ is at most 1. 
\begin{figure}[htbp]
\begin{center}
\includegraphics[scale=.5]{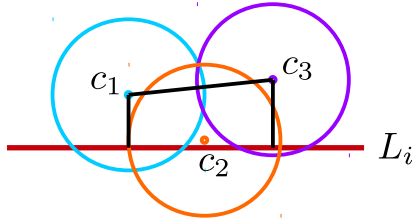} 
\end{center}
\caption{Proof of Lemma \ref{upper-side-ind-lemma}.}
\label{fig-upperside}
\end{figure}
Then, if $c_2$ is below $\overline{c_1c_3}$, it must intersect $\overline{c_1c_3}$ as the centers are above the horizontal line $L_i$. Therefore, we have arrived at a contradiction that either $d_1$, $d_2$ are intersecting or $d_2$, $d_3$ are intersecting.
\end{proof}

\begin{lemma}\label{lower-side-ind-lemma} Let $d_1, d_2 \in D_i^b$  and $d_3 \in D_i^a$ be three disks with centers $c_1$, $c_2$, and $c_3$ respectively. Assume that $x(c_1) < x(c_2) < x(c_3)$. Now if $d_1$, $d_2$ and $d_2$, $d_3$ are non-intersecting, then $d_1$, $d_3$ are non-intersecting.
\end{lemma}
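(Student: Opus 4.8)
The plan is to argue directly from the geometry, converting non-intersection of two unit disks (radius $\tfrac12$) into a lower bound on the distance between their centers: two disks of $D_i$ are non-intersecting exactly when their centers are more than $1$ apart. First I would place $L_i$ on the $x$-axis and write $d(p,q)$ for Euclidean distance. Since every disk meets $L_i$, each center lies in the strip $|y|\le\tfrac12$; here $c_1=(x_1,y_1)$, $c_2=(x_2,y_2)$ have $y_1,y_2\in[-\tfrac12,0]$ while $c_3=(x_3,y_3)$ has $y_3\in[0,\tfrac12]$. The hypotheses become $(x_2-x_1)^2+(y_1-y_2)^2>1$ and $(x_3-x_2)^2+(y_3-y_2)^2>1$, and the goal is $(x_3-x_1)^2+(y_3-y_1)^2>1$.

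Setting $\delta=|y_1-y_2|$ and $v=y_3-y_2$, the strip constraint yields $\delta\le\tfrac12$ and $v\le1$, and the two hypotheses give $x_2-x_1>\sqrt{1-\delta^2}$ and $x_3-x_2>\sqrt{1-v^2}$. Because $x_1<x_2<x_3$, adding these gives $x_3-x_1>\sqrt{1-\delta^2}+\sqrt{1-v^2}$. I would then split on the sign of $y_1-y_2$. If $y_1\le y_2$, then $c_1$ is at least as far below $L_i$ as $c_2$ yet strictly farther to the left of $c_3$, so $y_3-y_1\ge y_3-y_2\ge0$ and $x_3-x_1>x_3-x_2>0$ combine to give $d(c_1,c_3)>d(c_2,c_3)>1$ at once.

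The remaining case $y_1>y_2$ is the only genuinely interesting one, and the step I expect to be the main obstacle: now $y_3-y_1=v-\delta$ and $c_1$ is vertically \emph{closer} to $c_3$ than $c_2$ is, so the simple monotonicity argument breaks down. Here I would reduce everything to the single inequality
\[
\bigl(\sqrt{1-\delta^2}+\sqrt{1-v^2}\bigr)^2+(v-\delta)^2\ \ge\ 1 ,
\]
after which the strict bound $(x_3-x_1)^2>\bigl(\sqrt{1-\delta^2}+\sqrt{1-v^2}\bigr)^2$ together with $(y_3-y_1)^2=(v-\delta)^2$ forces $(x_3-x_1)^2+(y_3-y_1)^2>1$.

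The cleanest route I see to that inequality is the substitution $\delta=\sin\alpha$, $v=\sin\beta$ with $\alpha,\beta\in[0,\tfrac\pi2]$: the left-hand side collapses to $(\cos\alpha+\cos\beta)^2+(\sin\beta-\sin\alpha)^2=2+2\cos(\alpha+\beta)$, and since $\delta\le\tfrac12$ forces $\alpha\le\tfrac\pi6$ and $v\le1$ forces $\beta\le\tfrac\pi2$, we have $\alpha+\beta\le\tfrac{2\pi}{3}$, whence $\cos(\alpha+\beta)\ge-\tfrac12$ and the quantity is $\ge1$. Equivalently, expanding the square reduces the inequality to $1-2v\delta+2\sqrt{(1-\delta^2)(1-v^2)}\ge0$, which is immediate because $2v\delta\le2\cdot1\cdot\tfrac12=1$ and the radical is nonnegative. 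The only care needed is to preserve strictness throughout, which holds since the $x$-gap bounds inherited from the hypotheses are strict; this completes the argument.
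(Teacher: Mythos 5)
Your proof is correct, and it takes a genuinely different route from the paper's. The paper argues by contradiction with a synthetic, Voronoi-style argument: assuming $d_1$ and $d_3$ intersect, it draws the vertical line $V_L$ through $c_2$, sets $O$ to be the intersection of $V_L$ with $L_i$, observes that $c_3$ lies in the upper-right quadrant and $c_1$ in the lower-left quadrant determined by these two lines, and then uses an auxiliary unit disk $d^*$ centered at $O$ together with perpendicular bisectors (its ``voronoi partition lines'') of auxiliary points $c_2',c_3'$ and of $c_2,c_3$ to conclude that the bisector of $c_2$ and $c_3$ leaves the entire lower-left quadrant on the $c_2$ side; since $c_1$ lies in that quadrant, $c_1$ is closer to $c_2$ than to $c_3$, so $\overline{c_1c_3}\ge\overline{c_1c_2}>1$, contradicting $\overline{c_1c_3}\le 1$. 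You instead give a direct coordinate computation that never compares $d(c_1,c_3)$ with $d(c_1,c_2)$: everything reduces to the inequality $1-2v\delta+2\sqrt{(1-\delta^2)(1-v^2)}\ge 0$, which holds precisely because $\delta\le\tfrac{1}{2}$ and $v\le 1$. The trade-off is this: the paper's bisector argument establishes the stronger domination fact (every point of the lower-left quadrant, not just $c_1$, is closer to $c_2$ than to $c_3$), structurally parallel to its proof of Lemma \ref{upper-side-ind-lemma}, but it rests on steps asserted as ``easy observation'' about slopes and positions of the bisectors; your computation proves only what the lemma needs, $d(c_1,c_3)>1$, yet every step is a checkable algebraic inequality, strictness is tracked explicitly, and it makes visible exactly where the hypothesis $d_1,d_2\in D_i^b$ enters --- it forces $\delta\le\tfrac{1}{2}$, hence $2v\delta\le 1$, and the statement genuinely fails if the two lower centers are allowed to be much deeper below the line. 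Two small remarks: your case split on the sign of $y_1-y_2$ is avoidable, since for $y_1\le y_2$ the same expansion gives $2+2v\delta+2\sqrt{(1-\delta^2)(1-v^2)}\ge 1$ trivially; and your observation that equality in the key bound occurs at $\delta=\tfrac{1}{2}$, $v=1$ shows the estimate is tight, so the strictness you inherit from the $x$-gaps is not a pedantic detail but essential.
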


\begin{proof} Suppose on the contrary, we assume that $d_1$ and $d_3$ are intersecting. Then clearly $\overline{c_1c_3}$ is at most 1. Also by the assumption, both $\overline{c_1c_2}$ and $\overline{c_2c_3}$ are greater than 1. Let $V_L$ be a vertical line through $c_2$ (see Figure \ref{fig-lowerside}). The two lines $L_i$ and $V_L$ intersect at a point $O$ and partition the space into four quadrants: `$++$', `$+-$', `$--$', and `$-+$'. The point $c_3$ is in `$++$', whereas $c_1$ is in `$--$'. 
\begin{figure}[htbp]
\begin{center}
\includegraphics[scale=.5]{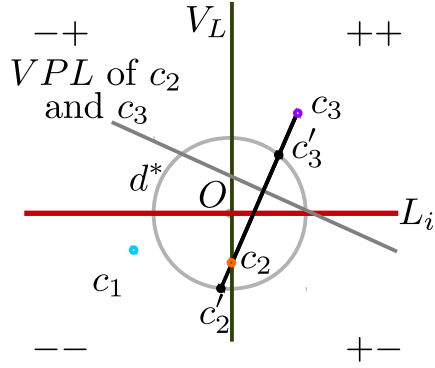} 
\end{center}
\caption{Proof of Lemma \ref{lower-side-ind-lemma}}
\label{fig-lowerside}
\end{figure}
Now consider an unit disk $d^*$ whose center coincides with $O$. Note that, all disks in $D_i$ intersect the line $L_i$. Hence the disk $d_2$ contains the point $O$. Further, since $c_2$ and $c_3$ are non-intersecting, $c_3$ must be outside $d^*$.

Take the segment $\overline{c_2c_3}$ which intersect $d^*$ at $c'_3$. Further, extend the segment $\overline{c_2c_3}$ in the direction of $c_2$ such that it intersect another point $c'_2$ on $d^*$. Consider the segment $\overline{c'_2c'_3}$. Now by an easy observation, we say that, the voronoi partition line ({\it VPL}) of $c'_2$ and $c'_3$ passes through $O$ and intersects the two quadrants `$+-$' and `$-+$'. Again, consider the segment $\overline{c_2c'_3}$. Since $c_2$ is on the line through $c'_2$ and $c'_3$, the slope of the {\it VPL} of $c_2$ and $c'_3$ must be the same as that of $c_2'$ and $c_3'$. Further, this {\it VPL} is to the right of the {\it VPL} of $c'_2$ and $c'_3$ and contains the whole `$--$' quadrant to its left. Due to similar argument, the {\it VPL} of $c_2, c_3$ contains the whole `$--$' quadrant to its left. Since $c_1$ and $c_2$ are in `$--$' quadrant, clearly the point $c_1$ is closer to $c_2$ than $c_3$. Therefore,   $\overline{c_1c_3}$ is greater than 1, since $\overline{c_1c_2}$ is greater than 1. This leads to the contradiction that $\overline{c_1c_3}$ is at most 1. 
\end{proof}

We now describe the algorithm as follows. Let $\{d_1^a, d_2^a, \ldots, d_{n_1}^a\}$ be the set of disks in $D_i^a$ sorted according to their increasing $x$-coordinates. Similarly, let $\{d_1^b, d_2^b, \ldots, d_{n_2}^b\}$ be the set of disks in $D_i^b$ sorted according to their increasing $x$-coordinates. We add two new disks $d_0^a$ and $d_0^b$ which satisfies the following, (i) $d_0^a$ is to the left of $d_1^a$ and $d_0^b$ is to the left of $d_1^b$, (ii) both $d_0^a$ and $d_0^b$ are independent with the disks in $D_i$, and (iii) $d_0^a$ and $d_0^b$ do not intersect each other. For any disk $d \in D_i$ ($d \neq \{d_0^a, d_0^b\}$), define $RI^a(d)$ (resp. $RI^b(d)$) be the rightmost disk in $D_i^a$ (resp. $D_i^b$)  which is independent with $d$ and whose center is to the left of the center of $d$.

We define a subproblem $S(k,\ell)$, for $0\leq k\leq n_1$ and $0\leq \ell\leq n_2$, to be the set of all disks in $D_i^a$ which are to the left of the disk $d_k^a\in D_i^a$ and set of all disks in $D_i^b$ which are to the left of the disk $d_\ell^b\in D_i^b$. Let $I(k,\ell)$ be an optimal set of independent unit disks in $S(k,\ell)$, and let $V(k,\ell)$ be the value of this solution. 

\begin{lemma}\label{lemma-dynamic-proof} Let $D_i^a(k)=\{d_1^a, d_2^a, \ldots, d_k^a\}$ be a set of $k$ leftmost disks in $D_i^a$ and $D_i^b(\ell)=\{d_1^b, d_2^b, \ldots, d_\ell^b\}$ be the set of $\ell$ leftmost disks in $D_i^b$. Now, 
\begin{itemize}
\item[A.] if $x(c_k^a)>x(c_\ell^b)$, then 
\begin{itemize}
\item[(1)] if $d_k^a \in I(k,\ell)$, then $V(k,\ell) = V(RI^a(d_k^a),RI^b(d_k^a)) +1$
\item[(2)] if $d_k^a \notin I(k,\ell)$, then $V(k,\ell) = V(k-1,\ell)$
\end{itemize}
\item[B.] if $x(c_k^a)<x(c_\ell^b)$, then 
\begin{itemize}
\item[(3)] if $d_\ell^b \in I(k,\ell)$, then $V(k,\ell) = V(RI^a(d_\ell^b),RI^b(d_\ell^b)) +1$
\item[(4)] if $d_\ell^b \notin I(k,\ell)$, then $V(k,\ell) = V(k,\ell-1)$
\end{itemize}
\end{itemize}
\end{lemma}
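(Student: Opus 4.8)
The plan is to condition on the rightmost disk of $S(k,\ell)$. In Case~A ($x(c_k^a)>x(c_\ell^b)$) the disk $d_k^a$ has the largest $x$‑coordinate in $S(k,\ell)$: it lies to the right of every below‑disk because $x(c_\ell^b)$ is the largest among them, and to the right of every other above‑disk because the $d_j^a$ are sorted. Hence $d_k^a$ is the rightmost disk, and parts (1),(2) are precisely the two alternatives $d_k^a\in I(k,\ell)$ and $d_k^a\notin I(k,\ell)$; Case~B is the mirror image with $d_\ell^b$ rightmost. First I would dispose of the \emph{exclude} parts (2),(4): if $d_k^a\notin I(k,\ell)$ then $I(k,\ell)$ is an independent set of $S(k,\ell)\setminus\{d_k^a\}$, which is exactly $S(k-1,\ell)$ (deleting the rightmost above‑disk just decrements the above‑index), and conversely every independent set of $S(k-1,\ell)$ sits inside $S(k,\ell)$; so the best solution of $S(k,\ell)$ avoiding $d_k^a$ has value $V(k-1,\ell)$, and (4) is identical with the two sides interchanged.

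The substance is the \emph{include} part (1) (and its mirror (3)). Write $r=RI^a(d_k^a)$ and $s=RI^b(d_k^a)$. The inequality $V(k,\ell)\le V(r,s)+1$ comes straight from the definition of the pointers: every disk of $I(k,\ell)\setminus\{d_k^a\}$ is independent of $d_k^a$ and lies to its left, so an above‑disk has index at most $r$ and a below‑disk index at most $s$; thus $I(k,\ell)\setminus\{d_k^a\}\subseteq S(r,s)$ and has size at most $V(r,s)$. The reverse inequality is where Lemmas~\ref{upper-side-ind-lemma} and~\ref{lower-side-ind-lemma} are used. Note $d_r^a$ and $d_s^b$ are the rightmost above‑ and below‑disks of $S(r,s)$ and, by definition of the pointers, both are independent of $d_k^a$. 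Taking an optimal solution $J$ of $S(r,s)$ that contains $d_r^a$ and $d_s^b$, every above‑disk $d_m^a\in J$ is independent of $d_k^a$: either $m=r$, or $x(c_m^a)<x(c_r^a)<x(c_k^a)$ and Lemma~\ref{upper-side-ind-lemma} applied to $(d_m^a,d_r^a,d_k^a)$ concludes this, since $d_m^a,d_r^a$ are both in $J$ and $d_r^a,d_k^a$ are independent. Symmetrically every below‑disk $d_m^b\in J$ is independent of $d_k^a$ by Lemma~\ref{lower-side-ind-lemma} applied to $(d_m^b,d_s^b,d_k^a)$ — exactly the two configurations (three above‑disks; two below‑disks and one above‑disk) that the two lemmas cover. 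Then $J\cup\{d_k^a\}$ is an independent set of $S(k,\ell)$, giving $V(k,\ell)\ge V(r,s)+1$.

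The main obstacle is the hypothesis hidden in that last step: the lemma applications need $d_r^a,d_s^b\in J$, whereas an arbitrary optimum of $S(r,s)$ need not contain its rightmost disks. The crux is therefore an exchange argument producing an optimum of $S(r,s)$ that contains $d_r^a$ and $d_s^b$ — equivalently, one all of whose disks are independent of $d_k^a$. The ingredients I would assemble are: (i) by the contrapositives of Lemmas~\ref{upper-side-ind-lemma} and~\ref{lower-side-ind-lemma}, every disk of $S(r,s)$ that \emph{does} intersect $d_k^a$ must intersect $d_r^a$ (if above) or $d_s^b$ (if below), because $d_r^a,d_s^b$ are themselves independent of $d_k^a$; and (ii) since every center lies within distance $1/2$ of $L_i$, any two above‑disks of $S(r,s)$ that both intersect $d_k^a$ are forced close enough to intersect each other, so an independent optimum uses at most one offending disk per side. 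Combining (i) and (ii), a single offending disk can be exchanged for $d_r^a$ or $d_s^b$ without losing cardinality; making this swap simultaneously valid on both sides, in the presence of above–below intersections, is the delicate point I expect to require the most care. Finally, Case~B follows by reflecting $L_i$, which turns Lemmas~\ref{upper-side-ind-lemma}–\ref{lower-side-ind-lemma} into the mirror statements needed when $d_\ell^b$ is the rightmost disk.
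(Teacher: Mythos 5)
Your handling of the exclude cases (2), (4) and your proof of the direction $V(k,\ell)\le V(r,s)+1$ in the include case are correct, and---unlike the paper itself---you have put your finger on exactly the right crux: one must produce an optimum of $S(r,s)$ all of whose disks are independent of $d_k^a$. However, the exchange argument you sketch for this crux does not work, and no argument can, because claim (1) of the lemma is false. The failing step is precisely your step (ii): after swapping the unique offending above-disk $d_m^a$ of $J$ for $d_r^a$, you must verify that $d_r^a$ is independent of the \emph{remaining} disks of $J$, and nothing guarantees this. Lemma \ref{upper-side-ind-lemma} cannot help, because its hypothesis requires two independent pairs, whereas $d_m^a$ and $d_r^a$ intersect (by your own observation (i)); so for a non-offending disk $d_{m'}^a\in J$ with $x(c_{m'}^a)<x(c_m^a)$, the triple $(d_{m'}^a,d_m^a,d_r^a)$ never satisfies the lemma, and $d_r^a$ may indeed intersect $d_{m'}^a$.

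Here is a concrete counterexample (radius-$1$ disks, so two disks intersect iff their centers are within distance $2$; rescale by $1/2$ for the paper's diameter-$1$ convention). Take $D_i^b$ empty (only the dummy $d_0^b$) and four disks with centers above the line $L_i: y=0$, namely $c_1=(0,\,0.01)$, $c_2=(1.75,\,1)$, $c_3=(1.9,\,0.01)$, $c_4=(3.7,\,1)$. One checks $|c_1c_2|^2=4.0426$, $|c_1c_4|^2=14.6701$, $|c_3c_4|^2=4.2201$ (all exceeding $4$), while $|c_1c_3|=1.9$, $|c_2c_3|\approx 1.001$, $|c_2c_4|=1.95$ (all below $2$); so the intersecting pairs are exactly $(d_1,d_3)$, $(d_2,d_3)$, $(d_2,d_4)$. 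Then $RI^a(d_4)=d_3$, the subproblem $S(3,0)$ has the \emph{unique} optimum $J=\{d_1,d_2\}$ with $V(3,0)=2$, and $\{d_1,d_4\}$ is an optimum of $S(4,0)$ containing $d_4$, so the hypothesis of claim (1) holds; yet $V(4,0)=2\neq V(3,0)+1=3$. In this instance your exchange would replace the offending disk $d_2$ by $d_3$, producing $\{d_1,d_3\}$, which is not independent; in fact the only disks of $S(3,0)$ compatible with $d_4$ are $d_1$ and $d_3$, which intersect each other, so no size-$2$ solution of $S(3,0)$ is compatible with $d_4$. Note that the paper's own proof jumps silently over the identical gap: it establishes only that any feasible solution containing $d_k^a$ is contained in $S(\tau,\nu)\cup\{d_k^a\}$ (your easy direction) and then asserts that $T^*$ equals $d_k^a$ plus an optimum of $S(\tau,\nu)$. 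The example above shows that the lemma, the resulting recurrence, and hence the claimed $O(n^2)$ algorithm overcount; a sound dynamic program must index subproblems by the last disk actually chosen on each side of $L_i$ (which is what Lemmas \ref{upper-side-ind-lemma} and \ref{lower-side-ind-lemma} genuinely support), not by the prefix pointers $RI^a$, $RI^b$.
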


\begin{proof} We prove cases 1 and 2. The proof of the cases 3 and 4 are similar. Here we assume that, $x(c_k^a)>x(c_\ell^b)$, i.e., the disk $d_k^a$ is to the right of the disk $d_\ell^b$. Let $T^*$ be a maximum independent set of disks for subproblem $S(k,\ell)$. There are two possibilities, either  $d_k^a$ is in the optimal solution or not.
\begin{itemize}
\item[$d_k^a\in$] $I(k,\ell)$: Let us assume that, $d_\tau^a = RI^a(d_k^a)$ and $d_\nu^b = RI^b(d_k^a)$. Since, $d_k^a$ is in the optimal solution, no disk in $D_i^a$ (resp. $D_i^b$) whose center is in between the centers of $d_\tau^a$ (resp. $d_\nu^b$) and $d_k^a$ can be present in any feasible solution. Thus any feasible solution contains disks from $D_i^a(\tau)$ and $D_i^b(\nu)$. Therefore, $T^*$ consists of $d_k^a$, together with the optimal solution to the subproblem $S(\tau,\nu)$.

\item[$d_k^a \notin$] $I(k,\ell)$: By an argument similar to case 1, we say that, an optimal solution for $D_i^a(k-1)$ and $D_i^b(\ell)$ gives an optimal solution for $D_i^a$ and $D_i^b$.
\end{itemize}
%
This completes the proof of the lemma.
\end{proof}

Therefore, Lemma \ref{lemma-dynamic-proof} suggests the following recurrence relation:


  \begin{equation*}
  \begin{displaystyle}
  V(k,\ell) = \left. \max
  \begin{cases}
  \left. \begin{aligned}
   & ~~V(RI^a(d_k^a),RI^b(d_k^a))+1, \\
   & ~~V(k-1,\ell),\\
    \end{aligned} ~~~~\right\} \text{for } x(c_k^a)>x(c_\ell^b)\\
    \left.    \begin{aligned}
    & ~~V(RI^a(d_\ell^b),RI^b(d_\ell^b))+1,\\
    & ~~V(k,\ell-1),
    \end{aligned} ~~~~\right\} \text{for } x(c_k^a)<x(c_\ell^b)
  \end{cases}
  \right.
\end{displaystyle}
\end{equation*}

\noindent\textbf{Optimal Solution:} The optimal solution can be found by calling the function $V(n_1,n_2)$ with the base cases $V(k,\ell) =0$ where both $k,\ell =2$. Clearly, the final optimal solution contains the disks $d_0^a$ and $d_0^b$. Hence, we reduce the value of the optimal solution by 2 and remove these two disks from the optimal solution.


\noindent\textbf{Running time:} Let $T(n_i)$ be the total time taken by an algorithm \cz to evaluate $V(n_1,n_2)$. For a particular disk $d\in D_i$, finding either $RI^a(d)$ or $RI^b(d)$ requires $O(n_i)$ time. Hence, in $O(n_i^2)$ time, we find $RI^a(d)$ and $RI^b(d)$ for all $d\in D_i$.  During recursive calls, for a particular disk $d$, the disks $RI^a(d)$ and  $RI^b(d)$ can be found in $O(1)$ time. Therefore, the running time of \cz will be $O(n_i^2)$. Further, this algorithm requires $O(n_i^2)$ space to store the values of $V(k,\ell)$, for $0\leq k \leq n_1$ and $0\leq \ell \leq n_2$. Finally, we now have the following theorem.

\begin{theorem}\label{theo-n2} {\textit{MISUDG-L}} problem can be solved optimally in $O(n_i^2)$ time and $O(n_i^2)$ space.
\end{theorem}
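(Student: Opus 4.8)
The plan is to assemble the theorem from the three lemmas and the complexity bookkeeping already developed, since the statement is essentially the final accounting of the dynamic program. First I would settle correctness. The recurrence for $V(k,\ell)$ is justified by Lemma \ref{lemma-dynamic-proof}, which reduces the question of whether the rightmost of the two boundary disks $d_k^a, d_\ell^b$ belongs to the optimum to a single lookup of the form $V(RI^a(\cdot),RI^b(\cdot))+1$ against the alternative of discarding that disk. I would stress that the soundness of this reduction rests on Lemmas \ref{upper-side-ind-lemma} and \ref{lower-side-ind-lemma}: they guarantee that the pairwise independence of an $x$-sorted chain of disks can be certified by consecutive pairs alone. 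Hence, once the rightmost disk is fixed and its two rightmost compatible neighbours $RI^a$ and $RI^b$ are identified, every disk retained by the recursion stays independent of the fixed disk without any further check, and the two sub-solutions on the upper and lower sides recombine with $d_k^a$ (or $d_\ell^b$) into a valid independent set. This locality is precisely what lets the remaining subproblem depend only on the index pair $(RI^a,RI^b)$.

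Next I would dispose of the boundary. Evaluating $V(n_1,n_2)$ bottoms out at the sentinel disks $d_0^a,d_0^b$, and by construction every returned optimum contains both sentinels; since they are independent of all disks in $D_i$ and of each other, their presence never affects feasibility, so subtracting $2$ from $V(n_1,n_2)$ and removing them recovers the true optimum for $D_i$ exactly.

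Then I would bound the resources. Computing $RI^a(d)$ and $RI^b(d)$ for a single disk $d$ is an $O(n_i)$ scan of the sorted lists, hence $O(n_i^2)$ over all disks, and these values are tabulated so that each subsequent reference costs $O(1)$. The table $V(k,\ell)$ has $(n_1+1)(n_2+1)=O(n_i^2)$ entries, and the recurrence fills each with a constant number of lookups and comparisons; memoizing (or filling bottom-up in increasing $x$-order) ensures every entry is evaluated once, so the total fill-in is $O(n_i^2)$. The table dominates the memory, giving $O(n_i^2)$ space, and the $O(n_i\log n_i)$ presort is absorbed into the time bound, yielding the claimed $O(n_i^2)$ time and $O(n_i^2)$ space.

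I expect the genuinely delicate point to be the correctness of the decomposition rather than the complexity, which is routine. Concretely, the claim that collapsing the current subproblem to $S(RI^a(d),RI^b(d))$ loses no feasible solution has to be tied carefully back to Lemmas \ref{upper-side-ind-lemma}--\ref{lower-side-ind-lemma}: one must confirm both that every disk lying strictly between the fixed rightmost disk and its $RI^a$/$RI^b$ witness conflicts with the fixed disk, and that every retained disk further to the left remains independent of it through the consecutive-pair transitivity. Once this is made explicit, the two cases of the recurrence are mutually exclusive and exhaustive, and the theorem follows.
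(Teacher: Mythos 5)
Your proposal retraces the paper's own argument step for step (the same prefix subproblems, the same jump to $(RI^a(d_k^a),RI^b(d_k^a))$ via Lemma \ref{lemma-dynamic-proof}, the same sentinels and table bookkeeping), but the step you yourself single out as delicate is a genuine gap, and it cannot be closed in the way you claim. Lemmas \ref{upper-side-ind-lemma} and \ref{lower-side-ind-lemma} give transitivity only when the \emph{middle} disk is independent of both outer disks. A disk retained in the collapsed subproblem, i.e.\ one lying to the left of the witness $d^a_\tau=RI^a(d_k^a)$, may intersect $d^a_\tau$ itself, and then neither lemma says anything about it: being to the left of the rightmost disk independent of $d_k^a$ does not make a disk independent of $d_k^a$, because independence is not monotone in the $x$-distance once centers sit at different heights. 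Hence the recombination claim---that an optimum of $S(\tau,\nu)$ together with $d_k^a$ is feasible, so that $V(k,\ell)=V(\tau,\nu)+1$ whenever some optimum contains $d_k^a$---does not follow from the lemmas and is in fact false. (The paper's proof of Lemma \ref{lemma-dynamic-proof} asserts exactly this sentence without proof, so the flaw is inherited from the paper rather than introduced by you; but it is a real flaw, not a gap that ``making things explicit'' can fill.)

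Concretely, take $L_i: y=0$, disks of radius $1/2$, and centers $c_1=(-1.85,\,0.499)$, $c_2=(-0.95,\,0.001)$, $c_3=(-0.88,\,0.499)$, $c_4=(0,\,0.001)$, so that $D_i^a=\{d_1,d_2,d_3,d_4\}$ and $D_i^b=\emptyset$. The pairs $\{d_1,d_2\},\{d_1,d_4\},\{d_3,d_4\}$ have center distance greater than $1$ (independent), while $\{d_1,d_3\},\{d_2,d_3\},\{d_2,d_4\}$ have center distance less than $1$ (intersecting); every triple contains an intersecting pair, so the true optimum is $2$. However, $RI^a(d_4)=d_3$ (since $d_3$ is independent of $d_4$ but $d_2$ is not), the subproblem on $\{d_1,d_2,d_3\}$ has value $2$, attained only by $\{d_1,d_2\}$, and the recurrence therefore returns $2+1=3$: the retained disk $d_2$ lies to the left of the witness $d_3$ yet intersects $d_4$, and transitivity is powerless because $d_2$ and $d_3$ intersect. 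So on this instance your argument, the paper's Lemma \ref{lemma-dynamic-proof}, and with them the stated proof of Theorem \ref{theo-n2}, all fail. A repair needs a different state space---for instance, index subproblems by the pair of \emph{actually chosen} rightmost disks on the two sides, against which each new disk is tested directly; there the middle disk in every application of Lemmas \ref{upper-side-ind-lemma} and \ref{lower-side-ind-lemma} is itself chosen, so transitivity genuinely applies---after which the claimed $O(n_i^2)$ time bound would have to be re-established for that formulation, which is not immediate.
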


\section{\boldmath{$O(n^2)$} time factor 2 approximation for {\textit{MISUDG}} problem}

In this section, we design a factor 2 approximation algorithm for {\it MISUDG} problem. 
Let $D=\{d_1,d_2,\ldots, d_n\}$ be a set of $n$ unit disks in the plane. We first place horizontal lines from top to bottom with unit distance between each consecutive pair. Assume that there are $k$ such horizontal lines $\{L_1,L_2,\ldots, L_k\}$. Let $D_i \subseteq D$ be the set of disks which are intersected by the line $L_i$. Now we have the following  observation.

\begin{observation}\label{obser-even-odd} Any two disks, $d\in D_i$ and $d'\in D_j$ are independent (non-intersecting) if  $|i-j|>1$, for $1\leq i,j\leq k$.
\end{observation}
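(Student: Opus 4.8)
The plan is to prove the statement by a direct metric argument, comparing the distance between the two disk centers with the sum of their radii. Since every disk has diameter $1$, each has radius $\frac{1}{2}$, and a disk intersects a horizontal line exactly when the vertical distance from its center to that line is at most $\frac{1}{2}$. First I would fix a vertical coordinate for the lines: writing $y(L_i)$ for the height of $L_i$, the hypothesis that consecutive lines are placed at unit distance gives $|y(L_i)-y(L_j)| = |i-j|$ for all $i,j$. The whole argument then reduces to tracking $y$-coordinates, because the horizontal coordinates can only increase the separation between the centers.

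The key step is a reverse-triangle-inequality estimate on the vertical gap between the two centers. Let $c$ and $c'$ be the centers of $d\in D_i$ and $d'\in D_j$. Membership in $D_i$ and $D_j$ gives $|y(c)-y(L_i)|\le \frac{1}{2}$ and $|y(c')-y(L_j)|\le \frac{1}{2}$, so
\[
|y(c)-y(c')| \;\ge\; |y(L_i)-y(L_j)| - |y(c)-y(L_i)| - |y(c')-y(L_j)| \;\ge\; |i-j| - 1.
\]
When $|i-j|>1$, i.e.\ $|i-j|\ge 2$, this yields $|y(c)-y(c')|\ge 1$, and since the Euclidean distance $\overline{cc'}$ is at least the absolute difference of the $y$-coordinates, we get $\overline{cc'}\ge 1$. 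As the sum of the two radii is exactly $\frac{1}{2}+\frac{1}{2}=1$, the two disks have disjoint interiors, which is precisely what it means for $d$ and $d'$ to be independent.

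The only delicate point will be the boundary configuration in which $\overline{cc'}=1$ exactly: here the disks are tangent and share a single point, so a reader adopting the strict ``no common intersection point'' reading of independence could object. I would handle this by observing that equality throughout the chain above forces both disks to be tangent to their respective lines and the centers to be vertically aligned, a degenerate coincidence that a general-position assumption rules out; under that assumption the two membership inequalities are strict, the gap is strictly larger than $|i-j|-1\ge 1$, and $d,d'$ are genuinely disjoint. Alternatively, reading independence as disjointness of interiors makes the non-strict bound $\overline{cc'}\ge 1$ already sufficient, and no special case is needed.
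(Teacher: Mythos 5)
Your proof is correct. The paper in fact states this observation with no proof at all, treating it as self-evident, so there is nothing to compare against; your reverse-triangle-inequality calculation is precisely the argument the authors implicitly rely on: each center lies within $\frac{1}{2}$ of its line (radius $\frac{1}{2}$, since the paper's unit disks have diameter $1$), the lines are at distance $|i-j|\ge 2$, hence the centers are at distance at least $1$, the sum of the two radii. Your closing caveat about tangency is a genuine subtlety rather than pedantry: because the paper defines independence in \emph{MISUDG-L} as having ``no common intersection point,'' two closed disks at center distance exactly $1$ (each tangent to its own line and vertically aligned) would share a point while satisfying $|i-j|=2$, so the observation as literally stated needs either a general-position assumption or an interior-disjointness reading of ``intersecting'' --- the paper makes neither explicit, and your proposal is the more careful for flagging it.
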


Note that, algorithm \cz optimally solves {\textit{MISUDG-L}} problem. Run \cz on each $D_i$, for $1\leq i\leq k$ and let $S_i$ be an independent set of unit disks of maximum cardinality in $D_i$, $1\leq i\leq k$. Let $S_{odd} = \bigcup_{\substack{1\leq i\leq k,\\ i ~\text{is odd}}} S_i$ and $S_{even} = \bigcup_{\substack{1\leq i\leq k,\\ i ~\text{is even}}} S_i$. We set $S$ as $S_{odd}$ or $S_{even}$ depending on whether $|S_{odd}|$ is greater or less than $|S_{even}|$ and report $S$ as the result of our algorithm. We now have the following theorem.

\begin{theorem}\label{theo-n2-fac2} The time and space complexities of our algorithm are both $O(n^2)$ and it produces a result with approximation factor $2$.
\end{theorem}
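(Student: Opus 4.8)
The plan is to prove the two claims—complexity and approximation factor—separately, both hinging on one geometric fact: since every disk has diameter $1$ and consecutive horizontal lines are spaced at unit distance, the vertical extent of any disk (a closed interval of length $1$) meets at most two of the lines, and these two must be consecutive. First I would establish this fact, which immediately gives $\sum_{i=1}^{k} n_i \le 2n$. This is the engine for the running time: by Theorem~\ref{theo-n2}, algorithm \cz spends $O(n_i^2)$ time and space on $D_i$, so the total cost of the $k$ invocations is $\sum_i O(n_i^2)$. Since all $n_i \ge 0$, we have $\sum_i n_i^2 \le \left(\sum_i n_i\right)^2 \le (2n)^2 = O(n^2)$, while the preprocessing (placing the lines and assigning each disk to the one or two lines it meets, which can be read off from the center's $y$-coordinate in $O(1)$ per disk) costs only $O(n)$. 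Hence both the time and the space are $O(n^2)$.

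For the approximation factor, let $OPT$ be a maximum independent set of $G$ and set $O_i = OPT \cap D_i$. The key structural points I would verify are: (i) each $O_i$ is a feasible solution to \textit{MISUDG-L} on $D_i$, since its disks are pairwise non-intersecting because $OPT$ is, so the maximality of $S_i$ returned by \cz gives $|S_i| \ge |O_i|$; and (ii) every disk of $OPT$ lies in at least one $D_i$ (it meets at least one line, as a length-$1$ closed vertical extent must contain a line), whence $\sum_i |O_i| \ge |OPT|$. Splitting this sum by the parity of $i$, at least one of $\sum_{i\ \mathrm{odd}} |O_i|$ and $\sum_{i\ \mathrm{even}} |O_i|$ is at least $\tfrac{1}{2}|OPT|$.

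It then remains to turn the chosen collection of $S_i$'s into a single valid independent set of $G$ without losing cardinality. Here I would use Observation~\ref{obser-even-odd}: for distinct odd indices $i \ne j$ we have $|i-j| \ge 2 > 1$, so any disk from $D_i$ and any disk from $D_j$ are non-intersecting; combined with each $S_i$ being internally independent, this shows $S_{odd}$ is an independent set of $G$, and likewise $S_{even}$. Moreover the odd-indexed $S_i$ are pairwise disjoint—a disk lying in both $D_i$ and $D_j$ for odd $i \ne j$ would meet two lines at least two apart, contradicting the at-most-two-consecutive-lines fact—so $|S_{odd}| = \sum_{i\ \mathrm{odd}} |S_i| \ge \sum_{i\ \mathrm{odd}} |O_i|$, and symmetrically $|S_{even}| \ge \sum_{i\ \mathrm{even}} |O_i|$. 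Choosing the larger of the two then yields $|S| \ge \tfrac{1}{2}|OPT|$, i.e.\ approximation factor $2$.

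The step I expect to be the main obstacle is the disjointness and independence bookkeeping in the last paragraph: one must track that a disk meeting two consecutive lines contributes to exactly one odd-indexed $D_i$ and one even-indexed $D_i$ (so it is never double-counted within $S_{odd}$ or within $S_{even}$), and that it is precisely the parity split that makes Observation~\ref{obser-even-odd} applicable across blocks. Since the geometric at-most-two-consecutive-lines claim underlies both the complexity bound and the correctness of this bookkeeping, I would state and prove it cleanly at the very start.
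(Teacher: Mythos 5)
Your proof is correct and follows essentially the same route as the paper: solve each $D_i$ optimally via algorithm \cz, split by parity of line index, invoke Observation~\ref{obser-even-odd} for independence of $S_{odd}$ and $S_{even}$, and take the larger of the two. The only difference is that you explicitly justify the key inequality $|S_{odd}|+|S_{even}|\geq|\textsc{Opt}|$ (via $O_i = \textsc{Opt}\cap D_i$, per-line optimality of $S_i$, and disjointness of same-parity $D_i$'s) and the bound $\sum_i n_i^2 = O(n^2)$, both of which the paper asserts without elaboration.
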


\begin{proof} Let \textsc{Opt} be an optimal solution for $D$. Form Observation \ref{obser-even-odd}, we say that the disks in $S_{odd}$ are independent, and so $S_{even}$. Also, we have $|S_{odd}|+ |S_{even}| \geq |\textsc{Opt}|$. Therefore, $2|S| = 2\max\{|S_{odd}|, |S_{even}|\}\geq |S_{odd}| + |S_{even}| \geq |\textsc{Opt}|$.  


Since disks in $S_{odd}$ and $S_{even}$ are mutually independent, the total time required for computing $S_{odd}$ or $S_{even}$ is $O(n^2)$. Hence, the total time for reporting $S$ is $O(n^2)$, as required. For each $D_i$, \cz takes $O(n^2)$ space. Hence, the total space complexity is $O(n^2)$.
\end{proof}


\begin{thebibliography}{10}
\expandafter\ifx\csname url\endcsname\relax
  \def\url#1{\texttt{#1}}\fi
\expandafter\ifx\csname urlprefix\endcsname\relax\def\urlprefix{URL }\fi
\expandafter\ifx\csname href\endcsname\relax
  \def\href#1#2{#2} \def\path#1{#1}\fi

\bibitem{Jallu2016}
R.~K. Jallu, G.~K. Das, Improved algorithm for maximum independent set on unit
  disk graph, in: {CALDAM}, 2016, pp. 212--223.

\bibitem{Das2016}
G.~K. Das, G.~D. da~Fonseca, R.~K. Jallu, Efficient independent set
  approximation in unit disk graphs, unpublished Manuscript (2016).

\bibitem{Roberts1978}
F.~Roberts, Graph Theory and Its Applications to Problems of Society, Society
  for Industrial and Applied Mathematics, 1978.

\bibitem{Hale1980}
W.~K. Hale, Frequency assignment: Theory and applications, Proceedings of the
  IEEE 68~(12) (1980) 1497--1514.

\bibitem{Malesinska1997}
E.~Malesi{\'{n}}ska, Graph-theoretical models for frequency assignment
  problems, Ph.D. thesis, Technische Universit{\"{a}}t Berlin, Berlin, Germany.
  (1997).

\bibitem{Agarwal1998}
P.~K. Agarwal, M.~van Kreveld, S.~Suri, Label placement by maximum independent
  set in rectangles, Compu. Geom. 11~(3) (1998) 209 -- 218.

\bibitem{Wang1988}
D.~Wang, Y.-S. Kuo, A study on two geometric location problems, Information
  Processing Letters 28~(6) (1988) 281 -- 286.

\bibitem{Garey1979}
M.~R. Garey, D.~S. Johnson, Computers and Intractability: A Guide to the Theory
  of {NP}-Completeness, W. H. Freeman \& Co., NY, USA, 1979.

\bibitem{Clark1990}
B.~N. Clark, C.~J. Colbourn, D.~S. Johnson, Unit disk graphs, Discrete
  Mathematics 86~(1) (1990) 165 -- 177.

\bibitem{Marathe1995}
M.~V. Marathe, H.~Breu, H.~B. Hunt, S.~S. Ravi, D.~J. Rosenkrantz, Simple
  heuristics for unit disk graphs, Networks 25~(2) (1995) 59--68.

\bibitem{Das2015}
G.~K. Das, M.~De, S.~Kolay, S.~C. Nandy, S.~Sur-Kolay, Approximation algorithms
  for maximum independent set of a unit disk graph, Information Processing
  Letters 115~(3) (2015) 439 -- 446.

\bibitem{Matsui1998}
T.~Matsui, Approximation algorithms for maximum independent set problems and
  fractional coloring problems on unit disk graphs, in: JCDCG, 1998, pp.
  194--200.

\bibitem{Hochbaum1985}
D.~S. Hochbaum, W.~Maass, Approximation schemes for covering and packing
  problems in image processing and {VLSI}, J. {ACM} 32~(1) (1985) 130 -- 136.

\bibitem{vanLeeuwen2005}
E.~J. van Leeuwen, Approximation algorithms for unit disk graphs, in: {WG},
  2005, pp. 351--361.

\end{thebibliography}
\end{document}